\title{Portfolio Optimization in the Stochastic Portfolio Theory Framework}
\author{Vassilios Papathanakos\footnote{Email: VPapathanakos@intechjanus.com}}
\newcommand\al\alpha
\newcommand\ga\gamma
\newcommand\de\delta
\newcommand\zi\zeta
\newcommand\Tht\Theta
\newcommand\tht\theta
\newcommand\La\Lambda
\newcommand\la\lambda
\newcommand\si\sigma
\newcommand\bb{\mathbf{b}}
\newcommand\eb{\mathbf{e}}
\newcommand\gb{\mathbf{g}}
\newcommand\pb{\mathbf{p}}
\newcommand\xb{\mathbf{x}}
\bmdefine\zerob{\mathbf{0}}
\bmdefine\alb{\mathbf{\al}}
\bmdefine\gab{\mathbf{\ga}}
\bmdefine\zib{\mathbf{\zi}}
\bmdefine\etab{\mathbf{\eta}}
\bmdefine\mub{\mathbf{\mu}}
\bmdefine\nub{\mathbf{\nu}}
\bmdefine\pib{\mathbf{\pi}}
\bmdefine\xib{\mathbf{\xi}}
\bmdefine\sib{\mathbf{\si}}
\bmdefine\taub{\mathbf{\tau}}
\newcommand\brac[1]{\left[ #1 \right]}
\newcommand\per[1]{\left( #1 \right)}
\newcommand\II{\mathbb{I}}
\DeclareMathOperator\diag{diag}
\newcommand\In{^{-1}}
\newcommand\Tr{^\intercal}
\newtheorem{defn}{Definition}
\newtheorem{lem}{Lemma}
\newtheorem{prop}{Proposition}
\newcommand\Req[1]{Equation~\eqref{eq:#1}}
\newcommand\Rfig[1]{Figure~\ref{fig:#1}}
\newcommand\pt\partial
\newcommand\p[2]{\frac{\pt #1}{\pt #2}}
\begin{document}
	\maketitle

	\tableofcontents

    \section{Introduction}

    Portfolio optimization is a fundamental concept in investing, but it presents many technical challenges. The two most important issues are the imperfect nature of the estimates of the market characteristics, and the ambiguity of the optimization objective.

    The market characteristics cannot be estimated with high accuracy because markets change at a pace that does not generally not permit the accumulation of enough relevant data to achieve statistical and non-singular convergence, e.g.,~in the estimation of the covariance matrix. One way forward is to exploit statistical relations in the estimate to identify combinations of statistics that are more resistant to uncertainty than the underlying data. Another approach is to assume a market model that uses powerful structural hypotheses to make up for the lack of data. The success of this approach depends crucially on the robustness and parsimony of these hypotheses. Proprietary statistical methodologies are often necessary to achieve these goals, especially when designing investment processes for institutional investors, who place a premium on long-term stability.

    Furthermore, even when focusing on a narrow investor segment, the optimization objective is typically fluid and vague. Even for a fixed investor, the relative desirability between outperformance potential and risk protection varies dramatically between market environments and cannot be easily quantified. Partly due to this, it is not practical to set up an optimization for a very long-term horizon (e.g.,~decadal spans of time). On the other hand, it is also not advisable to optimize exclusively on a very short time scale (e.g.,~hourly or daily), without a clear plan of how to avoid unnecessary turnover and other risks when joining the consecutive optimized portfolios together.

    In this write-up, I discuss some theoretical results with a view to motivate some practical choices in portfolio optimization. Even though the setting is not completely general (for example, the covariance matrix is assumed to be non-singular), I attempt to highlight the features that have practical relevance. In particular, the first two sections contain mathematical results that apply to portfolio optimization in the Stochastic Portfolio Theory (SPT) framework. This setting is flexible enough to describe most realistic assets, and it has been successfully employed for managing equity portfolios since 1987. The last section contains a discussion of some of the implications of these theoretical results for portfolio optimization in practice.

	\section{Fixed universe}

    In this section, I assume that the model parameters (i.e.,~the drifts and volatilities) are fixed in time. This allows for an explicit, time-independent solution of optimization problems.

	\subsection{Model and basic definitions}

	In following with the standard formulation of Stochastic Portfolio Theory \cite{F, Fa, FK}, consider an equity market composed of $n$ securities whose market capitalizations evolve as non-negative It\=o processes via
	\begin{equation}
		d\ln V_i(t)=\ga_i(t)\,dt+\sum_{l=1}^d\xi_{il}(t)\,dW_l(t)\,,
	\end{equation}
	where $n\ge2$, $d\ge n$, the $W$'s are independent and standard Brownian motions, and the $\ga$'s and $\xi$'s are measurable, adapted to the filtration generated by the $W$'s, and well-behaved\footnote{Essentially, the assumption is that they do not explode in a non-integrable sense within finite time, and that they do not grow too fast as time tends to infinity; for more details, cf.~\cite{Fa}.}. The $\xi$'s are non-degenerate.

	\begin{defn}
		The covariance matrix of the $V$'s is $\sib$, with elements
		\begin{equation}
			\si_{ij}(t)\equiv\sum_{k=1}^d\xi_{ik}(t)\,\xi_{jk}(t)\,;
		\end{equation}
		since $\sib$ is non-degenerate, it has the inverse $\sib^{-1}$. Also, $\eb$ denotes the column vector of ones, $\zerob$ the column vector of zeroes, and
		\begin{equation}\label{eq:Vol}
			s\equiv\frac{1}{\sqrt{\eb\Tr\cdot\sib\In\cdot\eb}}\,.
		\end{equation}
		Moreover, $\alb$ denotes the column vector of the individual stock arithmetic returns:
		\begin{equation}
			\alb\equiv\gab+\frac{1}{2}\,\diag\sib\,,
		\end{equation}
		where $\gab$ is the column vector of the individual stock growth rates, and $\diag\sib$ is the column vector of the diagonal elements of the covariance matrix $\sib$. Finally, define $a$ and $S$ through
		\begin{equation}\label{eq:Alp}
			a\equiv s^2\ \eb\Tr\cdot\sib\In\cdot\alb\,,
		\end{equation}
		and
		\begin{equation}\label{eq:VolA}
			S\equiv\sqrt{\alb\Tr\cdot\sib\In\cdot\alb-\frac{a^2}{s^2}+s^2}\,.
		\end{equation}
	\end{defn}

	\begin{lem}
		The argument of the square root in $S$ is positive; moreover,
		\begin{equation}
			S\ge s\,.
		\end{equation}
	\end{lem}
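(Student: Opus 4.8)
The plan is to observe that, since $\sib$ is a non-degenerate covariance matrix, it is symmetric and positive definite, and hence so is its inverse $\sib\In$. Consequently $\langle\xb,\mathbf{y}\rangle\equiv\xb\Tr\cdot\sib\In\cdot\mathbf{y}$ is a genuine inner product on $\mathbb{R}^n$, and the entire statement should collapse to the Cauchy--Schwarz inequality for this inner product.

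First I would translate the defined quantities into this notation. From \Req{Vol}, $\langle\eb,\eb\rangle=1/s^2$, and from \Req{Alp}, $\langle\eb,\alb\rangle=a/s^2$. The crucial rewriting is of the subtracted term in \Req{VolA}:
\[
	\frac{a^2}{s^2}=s^2\,\langle\eb,\alb\rangle^2=\frac{\langle\eb,\alb\rangle^2}{\langle\eb,\eb\rangle}\,,
\]
so that the argument of the square root equals $\langle\alb,\alb\rangle-\langle\eb,\alb\rangle^2/\langle\eb,\eb\rangle+s^2$. I would then apply Cauchy--Schwarz in the form $\langle\eb,\alb\rangle^2\le\langle\eb,\eb\rangle\,\langle\alb,\alb\rangle$; dividing by $\langle\eb,\eb\rangle>0$ yields $\langle\alb,\alb\rangle-\langle\eb,\alb\rangle^2/\langle\eb,\eb\rangle\ge0$. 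Therefore the argument of the square root is at least $s^2>0$, which settles both assertions at once: the argument is positive, and taking square roots gives $S\ge\sqrt{s^2}=s$ (with $s>0$ because $\eb\ne\zerob$ and $\sib\In$ is positive definite).

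The only real obstacle is the bookkeeping step of recognizing that $a^2/s^2$ is exactly the ``projection'' term appearing in Cauchy--Schwarz; once every quantity is expressed through the inner product induced by $\sib\In$, the result is immediate. The one point I would state carefully is the positive-definiteness of $\sib\In$, since it both legitimizes the inner product and guarantees that $s^2$ and $\langle\eb,\eb\rangle$ are strictly positive, which is what lets the division and the square roots go through.
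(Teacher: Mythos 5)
Your proposal is correct and follows essentially the same route as the paper: the paper's proof is exactly the Cauchy--Schwarz inequality for the inner product induced by $\sib\In$, showing $\alb\Tr\cdot\sib\In\cdot\alb-a^2/s^2\ge0$ so that the argument of the square root is at least $s^2$. Your version merely spells out the positive-definiteness bookkeeping that the paper leaves implicit.
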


	\begin{proof}
		The Cauchy-Schwarz inequality implies that
		\begin{equation}
			\per{\alb\Tr\cdot\sib\In\cdot\alb}\per{\eb\Tr\cdot\sib\In\cdot\eb}\ge\per{\eb\Tr\cdot\sib\In\cdot\alb}^2\,,
		\end{equation}
		which implies
		\begin{equation}
			\alb\Tr\cdot\sib\In\cdot\alb-\frac{a^2}{s^2}\ge0\,.\qedhere
		\end{equation}
	\end{proof}

	I consider portfolios of securities that are expressed through their weights, as opposed to shares. Each portfolio $\pib$ is fully funded,
	\begin{equation}
		\sum_{i=1}^n\pi_i(t)=1\,,
	\end{equation}
	but I do not insist that the weights are all non-negative. The logarithmic return of the portfolio satisfies
	\begin{equation}
		d\ln V_\pib(t)=\sum_{i=1}^n\pi_i(t)\,d\ln V_i(t)+\ga_\pib^*(t)\,dt\,,
	\end{equation}
	where the excess-growth rate $\ga_\pi^*$ is given by
	\begin{eqnarray}
		\ga_\pib^*(t)&=&\frac{1}{2}\per{\sum_{i=1}^n\si_{ii}(t)\,\pi_i(t)-\sum_{i,j=1}^n\si_{ij}(t)\,\pi_i(t)\,\pi_j(t)}=\\
		&=&\frac{1}{2}\per{\diag\sib\Tr\cdot\pib-\pib\Tr\cdot\sib\cdot\pib}\,.
	\end{eqnarray}
	All results below are understood under the condition that $V_\pib$ remains positive up to the time $t$ under consideration.

	\subsection{Extremal portfolios}

	Consider the following two extremal portfolios:
	\begin{itemize}
		\item $\nub^{(0)}$: the minimum-volatility portfolio,
		\item $\nub^{(1)}$: the maximum-growth portfolio.
	\end{itemize}

	Their existence, uniqueness, and basic properties are the subject of the two propositions below.

	\begin{prop}
		The minimum-volatility portfolio $\nub^{(0)}$ exists and is unique. It is given by
		\begin{equation}
			\nub^{(0)}=s^2\, \sib\In\cdot\eb\,,
		\end{equation}
 	   has volatility
		\begin{equation}
			\si_{\nub^{(0)}}=s
		\end{equation}
		and growth rate
		\begin{equation}
			\gab_{\nub^{(0)}}=a-\frac{1}{2}\,s^2\,,
		\end{equation}
	\end{prop}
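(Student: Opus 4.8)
The plan is to recognize the minimum-volatility portfolio as the solution of a constrained quadratic program: minimize the portfolio variance $\si_\pib^2=\pib\Tr\cdot\sib\cdot\pib$ over all fully funded portfolios, i.e.\ subject to the single linear constraint $\eb\Tr\cdot\pib=1$. Because $\sib$ is non-degenerate it is positive definite, so the objective is strictly convex and coercive on the affine constraint set; this immediately yields both existence and uniqueness of the minimizer, and guarantees that any stationary point of the Lagrangian is the global minimum rather than a saddle or a maximum. I would state this convexity observation first, since it is what makes the subsequent first-order condition sufficient.

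Next I would introduce a single Lagrange multiplier $\la$ and form $\mathcal{L}=\pib\Tr\cdot\sib\cdot\pib-2\la\per{\eb\Tr\cdot\pib-1}$. Differentiating with respect to $\pib$ gives the stationarity condition $\sib\cdot\pib=\la\,\eb$, hence $\pib=\la\,\sib\In\cdot\eb$. Imposing the funding constraint $\eb\Tr\cdot\pib=1$ forces $\la\,\eb\Tr\cdot\sib\In\cdot\eb=1$, and since $\eb\Tr\cdot\sib\In\cdot\eb=1/s^2$ by \Req{Vol}, I obtain $\la=s^2$. This reproduces the claimed formula $\nub^{(0)}=s^2\,\sib\In\cdot\eb$.

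The remaining two claims follow by direct substitution. For the volatility, I would compute $(\nub^{(0)})\Tr\cdot\sib\cdot\nub^{(0)}=s^4\,\eb\Tr\cdot\sib\In\cdot\sib\cdot\sib\In\cdot\eb=s^4\,\eb\Tr\cdot\sib\In\cdot\eb=s^2$, so $\si_{\nub^{(0)}}=s$. For the growth rate, I would first record the general identity for a portfolio's growth rate, obtained as the drift of $d\ln V_\pib$: combining $\sum_i\pi_i\,d\ln V_i$, whose drift is $\gab\Tr\cdot\pib$, with the excess-growth term $\ga_\pib^*$ and using $\alb=\gab+\tfrac12\,\diag\sib$ gives $\gab_\pib=\alb\Tr\cdot\pib-\tfrac12\,\pib\Tr\cdot\sib\cdot\pib$. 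Evaluating at $\nub^{(0)}$ and recognizing $\alb\Tr\cdot\nub^{(0)}=s^2\,\eb\Tr\cdot\sib\In\cdot\alb=a$ from \Req{Alp} (using the symmetry of $\sib\In$), together with the variance just computed, yields $\gab_{\nub^{(0)}}=a-\tfrac12\,s^2$.

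I expect no serious obstacle: the only point requiring genuine care is justifying that the stationary point is the unique global minimizer, which is handled cleanly by the positive-definiteness of $\sib$; everything else is routine linear algebra exploiting the symmetry of $\sib\In$ and the definitions of $s$ and $a$.
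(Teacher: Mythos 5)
Your proposal is correct and follows essentially the same route as the paper: the paper's proof also invokes uniqueness of the minimizer of the $\sib$-norm, derives the explicit formula via Lagrange multipliers, and reads off the volatility and growth rate from the definitions of $s$ and $a$. Your version simply spells out the details (strict convexity, the stationarity condition, and the substitution $\alb\Tr\cdot\nub^{(0)}=a$) that the paper leaves implicit.
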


	\begin{proof}
		$\nub^{(0)}$ is the minimizer of the $\sib$-norm, so it is unique. The explicit expression for $\sib^{(0)}$ is derived using Lagrange multipliers. The expressions for the volatility and growth rate of $\nub^{(0)}$ are a direct consequence of the definitions in \Req{Vol} and \Req{Alp}, which they motivated.
	\end{proof}

	\begin{prop}\label{pro:nuba}
		The maximum-growth portfolio $\nub^{(1)}$ exists and is unique. It is given by
		\begin{equation}
			\nub^{(1)}=\sib\In\cdot\brac{\alb+\per{s^2-a}\eb}=\nub^{(0)}+\sib\In\cdot\per{\alb-a\,\eb}\,.
		\end{equation}
	    This portfolio has volatility
	   	\begin{equation}
	   		\si_{\nub^{(1)}}=S\,,
	   	\end{equation}
	   	and growth rate
	   	\begin{equation}
	   		\gab_{\nub^{(1)}}=a+\frac{1}{2}\,S^2-s^2=\ga_{\nub^{(0)}}+\frac{1}{2}\,\per{S^2-s^2}\,.
	   	\end{equation}
	\end{prop}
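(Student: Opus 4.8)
The plan is to realize $\nub^{(1)}$ as the maximizer of the portfolio growth rate subject to the fully-funded constraint $\eb\Tr\cdot\pib=1$, to obtain existence and uniqueness for free from strict concavity, to solve the resulting constrained quadratic program with a single Lagrange multiplier exactly as was done for $\nub^{(0)}$, and finally to read off the volatility and growth rate by substituting the definitions in \Req{Vol}, \Req{Alp}, and \Req{VolA}.

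First I would make the objective explicit. Taking the drift of $d\ln V_\pib$ and using $\alb=\gab+\frac{1}{2}\,\diag\sib$ to absorb the excess-growth term, the growth rate of a fully-funded portfolio is
\begin{equation}
	\ga_\pib=\alb\Tr\cdot\pib-\frac{1}{2}\,\pib\Tr\cdot\sib\cdot\pib\,.
\end{equation}
Because $\sib$ is non-degenerate it is positive definite, so this objective has negative-definite Hessian $-\sib$ and is strictly concave on the affine constraint set $\{\pib:\eb\Tr\cdot\pib=1\}$; that alone guarantees a unique maximizer and settles the existence-and-uniqueness claim before any explicit formula is computed.

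Next I would solve for the maximizer. The stationarity condition of the Lagrangian is $\alb-\sib\cdot\pib-\la\,\eb=\zerob$, so $\pib=\sib\In\cdot\per{\alb-\la\,\eb}$; imposing $\eb\Tr\cdot\pib=1$ and using $\eb\Tr\cdot\sib\In\cdot\eb=1/s^2$ together with $\eb\Tr\cdot\sib\In\cdot\alb=a/s^2$ pins down $\la=a-s^2$, which yields the stated
\begin{equation}
	\nub^{(1)}=\sib\In\cdot\brac{\alb+\per{s^2-a}\eb}\,.
\end{equation}
Splitting off $\nub^{(0)}=s^2\,\sib\In\cdot\eb$ then produces the second stated form.

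For the two scalar identities I would exploit $\sib\cdot\nub^{(1)}=\alb+\per{s^2-a}\eb$, so that $\si_{\nub^{(1)}}^2=\brac{\alb+\per{s^2-a}\eb}\Tr\cdot\sib\In\cdot\brac{\alb+\per{s^2-a}\eb}$; expanding and repeatedly applying the two contractions above collapses the cross- and $\eb$-terms to $s^2-a^2/s^2$, which is exactly $S^2$ by \Req{VolA}. Likewise $\ga_{\nub^{(1)}}=\alb\Tr\cdot\nub^{(1)}-\frac{1}{2}S^2$, and computing $\alb\Tr\cdot\nub^{(1)}=a+S^2-s^2$ the same way gives $\ga_{\nub^{(1)}}=a+\frac{1}{2}S^2-s^2$, from which the difference with $\ga_{\nub^{(0)}}=a-\frac{1}{2}s^2$ is the claimed $\frac{1}{2}\per{S^2-s^2}$. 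I do not expect a genuine obstacle here: strict concavity does the conceptual work, and the only care required is the routine bookkeeping of the contractions when reducing the two quadratic forms to $S^2$ and $a+S^2-s^2$.
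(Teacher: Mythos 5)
Your proposal is correct and follows essentially the same route as the paper: the paper also obtains the explicit formula from the Lagrangian stationarity condition $\alb-\sib\cdot\pib-\la\,\eb=\zerob$ and reads off $\si_{\nub^{(1)}}$ and $\ga_{\nub^{(1)}}$ by substituting the definitions of $s$, $a$, and $S$, while its uniqueness argument (interpolating two putative maximizers and noting the growth rate strictly increases in between) is just your strict-concavity observation restricted to a line segment. You simply spell out the contractions that the paper leaves implicit.
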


	\begin{proof}
		The uniqueness of $\nub^{(1)}$ follows by contradiction: suppose that there was a second portfolio, $\xib$, that also satisfies
		\begin{equation}
			\ga_\xib=\ga_{\nub^{(1)}}.
		\end{equation}
		Then, consider the portfolio $\pib^{(\la)}$
		\begin{equation}
			\pib^{(\la)}=\la\,\nub^{(1)}+\per{1-\la}\,\xib\,;
		\end{equation}
		this portfolio has growth rate
		\begin{equation}
			\ga_{\pib^{(\la)}}=\ga_{\nub^{(1)}}+\frac{\la\per{1-\la}}{2}\per{\nub^{(1)}-\xib}\Tr\cdot\sib\per{\nub^{(1)}-\xib}\,,
		\end{equation}
		which is strictly greater than $\ga_{\nub^{(1)}}$ for any $\la\in\per{0,1}$, unless $\xib\equiv\nub^{(1)}$, due to the non-degeneracy of $\sib$.

		The explicit expression for $\sib^{(1)}$ is derived using Lagrange multipliers. Finally, the expressions for the volatility and growth rate of $\nub^{(1)}$ follow directly from the definitions in \Req{VolA}, which they motivated.
	\end{proof}

	\begin{lem}
		The covariance of $\nub^{(0)}$ and $\nub^{(1)}$ equals the variance of $\nub^{(0)}$.
	\end{lem}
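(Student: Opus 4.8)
The plan is to use the explicit decomposition $\nub^{(1)}=\nub^{(0)}+\sib\In\cdot\per{\alb-a\,\eb}$ established in \Rpro{nuba}, together with the two scalar identities underlying the definitions of $s$ and $a$, namely $\eb\Tr\cdot\sib\In\cdot\eb=1/s^2$ from \Req{Vol} and $\eb\Tr\cdot\sib\In\cdot\alb=a/s^2$ from \Req{Alp}. I first recall that the covariance of two portfolios is the $\sib$-bilinear form they induce, so that the quantity to be computed is $\per{\nub^{(0)}}\Tr\cdot\sib\cdot\nub^{(1)}$, while the variance of $\nub^{(0)}$ is $\per{\nub^{(0)}}\Tr\cdot\sib\cdot\nub^{(0)}=\si_{\nub^{(0)}}^2=s^2$ by the previous proposition.

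The main step is to expand the covariance bilinearly using the decomposition,
\begin{equation}
\per{\nub^{(0)}}\Tr\cdot\sib\cdot\nub^{(1)}=\per{\nub^{(0)}}\Tr\cdot\sib\cdot\nub^{(0)}+\per{\nub^{(0)}}\Tr\cdot\sib\cdot\sib\In\cdot\per{\alb-a\,\eb}\,.
\end{equation}
The first term is exactly the variance $s^2$, so it suffices to show that the cross term vanishes. In that term the product $\sib\cdot\sib\In$ telescopes to the identity, leaving $\per{\nub^{(0)}}\Tr\cdot\per{\alb-a\,\eb}$.

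Substituting $\nub^{(0)}=s^2\,\sib\In\cdot\eb$ and using the symmetry of $\sib\In$, the cross term becomes $s^2\,\eb\Tr\cdot\sib\In\cdot\per{\alb-a\,\eb}=s^2\per{\eb\Tr\cdot\sib\In\cdot\alb-a\,\eb\Tr\cdot\sib\In\cdot\eb}$, which equals $s^2\per{a/s^2-a/s^2}=0$ by the two scalar identities above. Hence the covariance equals $s^2$, the variance of $\nub^{(0)}$, as claimed. There is no serious obstacle here: the only substantive observation is that the cross term is proportional to $\eb\Tr\cdot\sib\In\cdot\alb-a\,\eb\Tr\cdot\sib\In\cdot\eb$, which is engineered to vanish by the very definition of $a$ in \Req{Alp}. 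Geometrically, the statement records that the tilt $\sib\In\cdot\per{\alb-a\,\eb}$ carrying $\nub^{(0)}$ into $\nub^{(1)}$ is $\sib$-orthogonal to $\nub^{(0)}$.
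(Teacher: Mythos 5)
Your proof is correct and is precisely the ``straightforward calculation'' that the paper alludes to without writing out: expand the covariance bilinearly via $\nub^{(1)}=\nub^{(0)}+\sib\In\cdot\per{\alb-a\,\eb}$ and observe that the cross term vanishes by the definitions of $s$ and $a$. Your closing remark about the tilt being $\sib$-orthogonal to $\nub^{(0)}$ also matches the interpretation the paper gives immediately after the lemma.
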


	This observation, which results from a straightforward calculation, can be interpreted as saying that the maximization of the growth rate by $\nub^{(1)}$ relies in maximizing the portfolio exposure to idiosyncratic sources of risk, which are orthogonal to the non-diversifiable equity risk (which is represented by the minimum-variance portfolio).

	\subsection{Efficient frontier}

	\begin{defn}
		Let $\nub^{(p)}$ denote the one-parametric portfolio family that interpolates $\nub^{(0)}$ and $\nub^{(1)}$:
		\begin{equation}
			\nub^{(p)}=\per{1-p}\,\nub^{(0)}+p\,\nub^{(1)}\,.
		\end{equation}
	\end{defn}

	\begin{prop}
		For every portfolio volatility $\si\in\brac{\si_{\nub^{(0)}},\si_{\nub^{(1)}}}$, there is a unique portfolio that maximizes the growth rate, and it is given by $\nub^{(p)}$ for the value $p\in\brac{0,1}$ for which $\si_{\nub^{(p)}}=\si$.
	\end{prop}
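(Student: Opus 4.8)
The plan is to reduce the problem to a classical mean--variance optimization and then exploit the special geometry of the family $\nub^{(p)}$. First I would record that the growth rate $\ga_\pib$, obtained by combining the drift $\pib\Tr\cdot\gab$ with the excess-growth rate $\ga_\pib^*$ and using $\alb=\gab+\tfrac12\,\diag\sib$, equals $\pib\Tr\cdot\alb-\tfrac12\,\pib\Tr\cdot\sib\cdot\pib$. Since the variance is held fixed at $\si^2$, maximizing $\ga_\pib$ is therefore equivalent to maximizing the arithmetic return $\pib\Tr\cdot\alb$ subject to $\eb\Tr\cdot\pib=1$ and $\pib\Tr\cdot\sib\cdot\pib=\si^2$. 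Introducing Lagrange multipliers $\eta$ and $\mu$ for these two constraints, the first-order condition reads $\alb-\eta\,\eb-\mu\,\sib\cdot\pib=\zerob$, so every critical point has the form $\pib=\sib\In\cdot\per{\la_1\,\eb+\la_2\,\alb}$ for scalars $\la_1,\la_2$ (the case $\mu=0$ forces $\alb\propto\eb$, where $S=s$ and the frontier collapses to the single point $\nub^{(0)}=\nub^{(1)}$, and is set aside). Both $\nub^{(0)}$ and $\nub^{(1)}$ are of exactly this form and are linearly independent; since $\eb\Tr\cdot\per{\sib\In\cdot\eb}=1/s^2\ne0$, imposing the funding constraint cuts the two-dimensional span down to the affine line $\left\{\nub^{(p)}:p\in\mathbb{R}\right\}$. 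Hence every candidate optimizer is some $\nub^{(p)}$.

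Next I would study how the volatility varies along this line. The variance $v(p)\equiv\per{\nub^{(p)}}\Tr\cdot\sib\cdot\nub^{(p)}$ is a convex quadratic in $p$, and its unique global minimum over all $p$ must coincide with the global minimum-variance portfolio, which is $\nub^{(0)}=\nub^{(p=0)}$; equivalently, the linear term of $v(p)$ vanishes precisely because the covariance of $\nub^{(0)}$ and $\nub^{(1)}$ equals the variance of $\nub^{(0)}$, as established in the lemma above. Therefore $v(p)=s^2+\per{S^2-s^2}\,p^2$, which is strictly increasing in $|p|$ and maps $p\in\brac{0,1}$ bijectively onto $\si\in\brac{s,S}$. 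This already yields the existence and uniqueness of the value $p$ for which $\si_{\nub^{(p)}}=\si$.

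Finally I would identify, among portfolios of the prescribed volatility, the growth-rate maximizer. The growth rate $\ga_{\nub^{(p)}}=\per{\nub^{(p)}}\Tr\cdot\alb-\tfrac12\,v(p)$ is a strictly concave quadratic in $p$, and since the global maximum-growth portfolio $\nub^{(1)}$ (\Rpro{nuba}) lies on this line, its vertex sits at $p=1$; thus $\ga_{\nub^{(p)}}$ is strictly increasing on $\per{-\infty,1}$. For a target volatility $\si\in\brac{s,S}$ the constraint $v(p)=\si^2$ is solved by $p=\pm p_0$ with $p_0=\sqrt{\per{\si^2-s^2}/\per{S^2-s^2}}\in\brac{0,1}$, and monotonicity gives $\ga_{\nub^{(p_0)}}>\ga_{\nub^{(-p_0)}}$. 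Since the feasible set (a hyperplane slice of the ellipsoid $\pib\Tr\cdot\sib\cdot\pib=\si^2$) is compact and nonempty for $\si\ge s$, the maximum is attained, and by the Lagrange analysis it can only be one of these two critical points; hence it is $\nub^{(p_0)}$ with $p_0\in\brac{0,1}$, as claimed.

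The main obstacle is not the optimization itself but pinning down the two vertices exactly: the argument hinges on recognizing that $\nub^{(0)}$ and $\nub^{(1)}$ are simultaneously the global extremizers (of variance and of growth rate, respectively) \emph{and} members of the interpolating family, so that the parabolas $v(p)$ and $\ga_{\nub^{(p)}}$ have vertices at $p=0$ and $p=1$. Once the preceding lemma is invoked to kill the linear term in $v(p)$, everything reduces to elementary one-variable calculus on the line.
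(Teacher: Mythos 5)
Your argument is correct and is essentially a fleshed-out version of the paper's first proof, which likewise rests on Lagrange multipliers together with the explicit formula $\nub^{(p)}=\sib\In\cdot\brac{p\,\alb+\per{s^2-a\,p}\eb}$ and the resulting expressions for $\si_{\nub^{(p)}}$ and $\ga_{\nub^{(p)}}$. You do supply two details that the paper's one-line appeal to Lagrange multipliers leaves implicit: the identification of the full critical set as the line through $\nub^{(0)}$ and $\nub^{(1)}$ (classical two-fund separation, after reducing the objective to the linear functional $\pib\Tr\cdot\alb$), and the resolution of the sign ambiguity $p=\pm p_0$ via the monotonicity of $\ga_{\nub^{(p)}}$ on $\per{-\infty,1}$, which is exactly where the restriction to $p\in\brac{0,1}$ comes from. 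Be aware that the paper also records a second, self-contained argument that you may find cleaner: writing a competing portfolio of the same volatility as $\etab^{(p)}=\nub^{(p)}+\xib^{(p)}$, the funding and equal-variance constraints force $\alb\Tr\cdot\xib^{(p)}=-\frac{1}{2\,p}\,\xib^{(p)}{}\Tr\cdot\sib\cdot\xib^{(p)}\le0$, which bounds the growth-rate gap directly and sidesteps both the Lagrange machinery and the compactness and constraint-qualification points you had to invoke to guarantee that the maximizer is among the critical points.
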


	\begin{proof}

		One way to show that $\nub^{(p)}$ is the unique maximizer is to use the explicit formulas for $\nub^{(0)}$ and $\nub^{(1)}$ in order to compute that the interpolated portfolio equals
		\begin{equation}\label{eq:Exp}
			\nub^{(p)}=\sib\In\cdot\brac{p\,\alb+\per{s^2-a\,p}\eb}=\nub^{(0)}+p\,\sib\In\cdot\brac{\alb-a\,\eb}\,,
		\end{equation}
		which implies its volatility equals
		\begin{equation}
			\si_{\nub^{(p)}}=\sqrt{\per{1-p^2}\,s^2+p^2\,S^2}
		\end{equation}
		and its growth rate
		\begin{equation}
			\ga_{\nub^{(p)}}=a+\per{p-\frac{p^2}{2}}\,S^2-\per{p+\frac{1-p^2}{2}}\,s^2\,.
		\end{equation}
		Using Laplace multipliers and the explicit expression for $\nub^{(p)}$ implies that it is the portfolio maximizing the growth rate for the given volatility.

		Another way to show the same result follows from parametrizing the optimal portfolio, $\etab^{(p)}$, via the difference portfolio, $\xib^{(p)}$,
		\begin{equation}
			\xib^{(p)}\equiv\etab^{(p)}-\nub^{(p)}\,,
		\end{equation}
		whence it follows that $\xib^{(p)}$ must satisfy the following two conditions:
		\begin{enumerate}
			\item the normalization of the portfolio weights implies
			\begin{equation}\label{eq:Fst}
				\eb\Tr\cdot\xib^{(p)}=0\,,
			\end{equation}
			\item the equality of the portfolio variances for $\nu^{(p)}$ and $\eta^{(p)}$ implies
			\begin{equation}\label{eq:Sec}
				\per{2\,\nub^{(p)}+\xib^{(p)}}\Tr\cdot\sib\cdot\xib^{(p)}=0\,.
			\end{equation}
		\end{enumerate}
		Substituting the explicit expression for $\nub^{(p)}$ from \Req{Exp} in \Req{Sec}, and using \Req{Fst} implies that the second condition is equivalent with
		\begin{equation}
			2\,p\,\alb\Tr\cdot\xib^{(p)}+\xib^{(p)}{}\Tr\cdot\sib\cdot\xib^{(p)}=0\,.
		\end{equation}

		Since the difference between the growth rates for $\etab^{(p)}$ and $\nub^{(p)}$ equals
		\begin{equation}
			\ga_{\etab^{(p)}}-\ga_{\nub^{(p)}}=\alb\Tr\cdot\xib^{(p)}=-\frac{1}{2\,p}\,\xib^{(p)}{}\Tr\cdot\sib\cdot\xib^{(p)}\leq0\,,
		\end{equation}
		the growth rate is maximized for $\xib^{(p)}\equiv\zerob$.
	\end{proof}

	\begin{lem}
		The efficient frontier has an infinite slope at $\nub^{(0)}$ and a zero slope at $\nub^{(1)}$.
	\end{lem}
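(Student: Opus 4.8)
The plan is to treat the efficient frontier as the parametric curve $p\mapsto\per{\si_{\nub^{(p)}},\ga_{\nub^{(p)}}}$ for $p\in\brac{0,1}$, using the explicit volatility and growth-rate formulas established in the preceding proposition, and to compute its slope $d\ga/d\si$ via the chain rule as the ratio $\per{d\ga_{\nub^{(p)}}/dp}/\per{d\si_{\nub^{(p)}}/dp}$. Here the growth rate plays the role of the vertical coordinate and the volatility that of the horizontal one, so an infinite slope means a vertical tangent and a zero slope a horizontal one.

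First I would differentiate the relation $\si_{\nub^{(p)}}^2=s^2+p^2\per{S^2-s^2}$, which gives
\begin{equation}
	\frac{d\si_{\nub^{(p)}}}{dp}=\frac{p\per{S^2-s^2}}{\si_{\nub^{(p)}}}\,.
\end{equation}
Then I would differentiate the growth-rate formula to obtain
\begin{equation}
	\frac{d\ga_{\nub^{(p)}}}{dp}=\per{1-p}\per{S^2-s^2}\,.
\end{equation}
Dividing these, and assuming the generic non-degenerate case $S>s$ so that the common factor $S^2-s^2$ is strictly positive, the slope of the frontier becomes
\begin{equation}
	\frac{d\ga_{\nub^{(p)}}}{d\si_{\nub^{(p)}}}=\frac{\per{1-p}\,\si_{\nub^{(p)}}}{p}\,.
\end{equation}

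The conclusion then follows by inspecting the two endpoints. At $p=1$, corresponding to $\nub^{(1)}$, the numerator vanishes while the denominator does not, so the slope is zero. At $p=0$, corresponding to $\nub^{(0)}$, the denominator vanishes while the numerator tends to $\si_{\nub^{(0)}}=s>0$, so the slope diverges; I would phrase this as the one-sided limit $p\to0^+$ to be careful, noting there that $d\si_{\nub^{(p)}}/dp\to0$ while $d\ga_{\nub^{(p)}}/dp\to S^2-s^2>0$.

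I do not anticipate a genuine obstacle, since the argument reduces to routine one-variable differentiation of the already-derived formulas. The only points requiring care are ruling out the degenerate case $S=s$, in which $\nub^{(0)}=\nub^{(1)}$ and the frontier collapses to a single point so that the notion of slope is vacuous, and handling the $p=0$ endpoint as a limit rather than a naive substitution. It is also worth recording that $d\si_{\nub^{(p)}}/dp>0$ on $\per{0,1}$, so that $\si_{\nub^{(p)}}$ is strictly increasing and the frontier is a genuine graph of $\ga$ as a function of $\si$, which guarantees the slope is well-defined along the interior of the curve.
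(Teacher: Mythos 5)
Your proposal is correct and follows essentially the same route as the paper: the paper's proof likewise computes $\pt\si_{\nub^{(p)}}/\pt p=p\per{S^2-s^2}/\si_{\nub^{(p)}}$ and $\pt\ga_{\nub^{(p)}}/\pt p=\per{1-p}\per{S^2-s^2}$ and evaluates at $p=0$ and $p=1$. Your additional remarks about the degenerate case $S=s$ and treating $p=0$ as a one-sided limit are sensible refinements but not a different argument.
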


	\begin{proof}
		This is a direct result of evaluating
		\begin{equation}
			\p{\si_{\nub^{(p)}}}{p}=p\,\frac{S^2-s^2}{\si_{\nub^{(p)}}}
		\end{equation}
		and
		\begin{equation}
			\p{\ga_{\nub^{(p)}}}{p}=\per{1-p}\per{S^2-s^2}
		\end{equation}
		at $p=0$ and $p=1$.
	\end{proof}

    \subsection{Risk-adjusted return}

	It is worth introducing the SPT-analog of the Sharpe ratio \cite{S}: assume that there is a benchmark investment, the risk-free asset, which delivers a return rate $b$ at zero volatility (in this case, the rate of return and the growth rate are equal).

    Then, one way to quantify the investment efficiency of a portfolio is via the ratio of the relative growth rate divided by the relative volatility:
	\begin{equation}\label{eq:Tht}
		\tht_\pib\equiv\frac{\ga_\pib-b}{\si_\pib}\,.
	\end{equation}
	In the case of the efficient frontier, this ratio equals
	\begin{equation}
		\tht_p=\frac{a-b+\per{p-\frac{p^2}{2}}\,S^2-\per{p+\frac{1-p^2}{2}}\,s^2}{\sqrt{\per{1-p^2}\,s^2+p^2\,S^2}}\,;
	\end{equation}
	its range of values are
	\begin{equation}
		\tht_0=\frac{a-b}{s}-\frac{s}{2}\le\tht_p\le\tht_1=\frac{a-b}{S}+\frac{S}{2}-\frac{s^2}{S}\,.
	\end{equation}

	\subsection{Below the efficient frontier}

	Consider two stocks, taken without loss of generality to have index 1 and 2 respectively, and construct the portfolio $\pib$ by interpolating between them with weights $x$ and $1-x$ respectively.

	Its growth rate equals the concave function
	\begin{equation}
		\ga_\pib=x\,\ga_1+\per{1-x}\ga_2+\frac{x\per{1-x}}{2}\per{\si_{11}+\si_{22}-2\,\si_{12}}\,,
	\end{equation}
	which has a maximum equal to
	\begin{equation}
		\ga_\pib^*=\frac{\ga_1+\ga_2}{2}+\frac{\per{\ga_1-\ga_2}^2}{2\per{\si_{11}+\si_{12}-2\,\si_{12}}}+\frac{\si_{11}+\si_{12}-2\,\si_{12}}{8}\,,
	\end{equation}
	at
	\begin{equation}
		x^*=\frac{1}{2}+\frac{\ga_1-\ga_2}{\si_{11}+\si_{12}-2\,\si_{12}}\,.
	\end{equation}

	Similarly, the variance of this two-stock portfolio equals the convex function
	\begin{equation}
		\si_\pib^2=x^2\,\si_{11}+\per{1-x}^2\,\si_{22}+2\,x\per{1-x}\,\si_{12}\,,
	\end{equation}
	which has a minimum equal to
	\begin{equation}
		\si_\pib^*=\frac{\si_{11}\per{\si_{22}-\si_{12}}^2+\si_{22}\per{\si_{11}-\si_{12}}^2+2\,\si_{12}\per{\si_{11}-\si_{12}}\per{\si_{22}-\si_{12}}}{\per{\si_{11}+\si_{22}-2\,\si_{12}}^2}
	\end{equation}
	at
	\begin{equation}
		x^*=\frac{\si_{22}-\si_{12}}{\si_{11}+\si_{22}-2\,\si_{12}}\,.
	\end{equation}

	These computations allow us to trace a parametric curve for the two-stock portfolio as the relative weight of the two stocks is varied as shown in \Rfig{TP}.

	\begin{figure}[!htbp]
		\centering
		\includegraphics[width=0.5\textwidth]{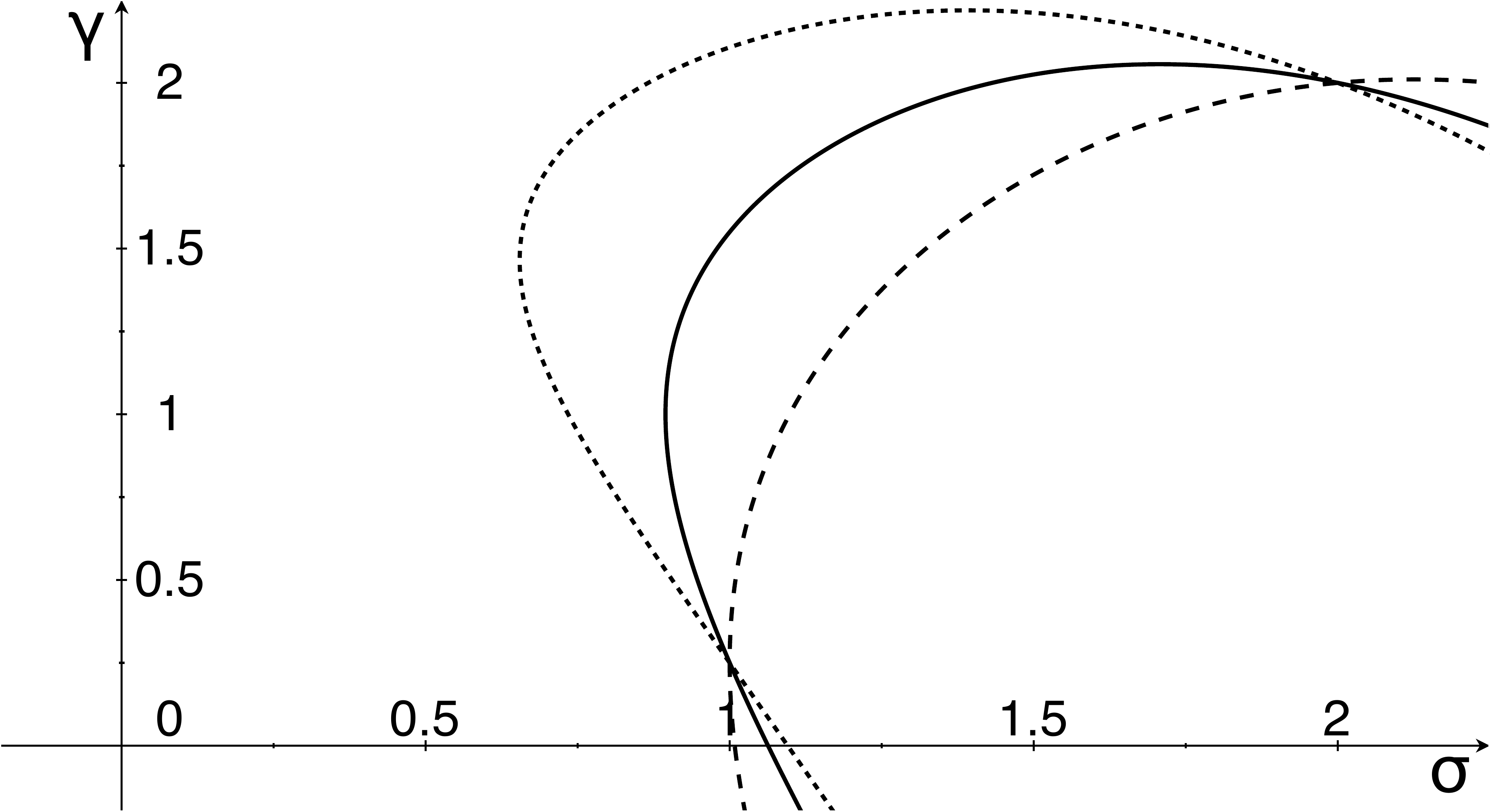}
		\caption{Behavior of the portfolio volatility and growth rate for a two-stock portfolio for $\si_{12}=0$ (continuous line), $\si_{12}=-1$ (dotted line), and $\si_{12}=1$ (dashed line). I assume that $\ga_1=\frac{1}{4}$, $\ga_2=2$, $\si_{11}=1$ and $\si_{22}=4$, and vary $x$ in the interval $\brac{-1,2}$. All lines intersect at the two points corresponding to the underlying stocks.}
		\label{fig:TP}
	\end{figure}

    It is clear that the worst-performing all-long portfolio is the one investing exclusively in the stock with the lowest growth rate. Similarly, the maximum volatility for the long-only case is realized by investing exclusively in the most volatile stock.

	The same argument can be applied recursively for more than two stocks, which implies that the minimum growth rate for a long-only portfolio is realized by investing exclusively in the worse performing stock. Similarly, the maximum volatility for a long-only portfolio is realized by investing exclusively in the most volatile stock. On the other hand, there is no lower bound for growth and upper bound for volatility concerning long-short portfolios.

	\subsection{Volatility-stabilized markets}

    When the model parameters are not fixed, the above results require adjustments. One of the simplest extensions is when the model parameters vary continuously, so that the portfolio optimization can be performed in a continuously-updated manner.

    As an example of such a scenario, consider the case of the volatility-stabilized market model \cite{FKa}, where
	\begin{equation}
		d\ln V_i(t)=\frac{dW_i(t)}{\sqrt{\mu_i(t)}}\,,
	\end{equation}
	with $\mu$'s being the market portfolio weights, namely
	\begin{equation}
		\mu_i(t)\equiv\frac{V_i(t)}{\sum_{j=1}^nV_j(t)}\,.
	\end{equation}

	The covariance matrix in this model equals
	\begin{equation}
		\si_{ij}=\frac{\de_{ij}}{\mu_i}\,,
	\end{equation}
	where $\de$ is the Kronecker delta. Its inverse matrix equals
	\begin{equation}
		\per{\si\In}_{ij}=\de_{ij}\,\mu_i\,,
	\end{equation}
	and the minimum-variance portfolio equals
	\begin{equation}
		\nub^{(0)}=\mub\,,
	\end{equation}
	where $\mub$ is the column vector of market weights; $\nub^{(0)}$ has the properties
	\begin{equation}
		\si_{\nub^{(0)}}=s=1\,,\qquad\ga_{\nub^{(0)}}=\frac{n-1}{2}\,.
	\end{equation}

	Regarding $\nu^{(1)}$, it is given by
	\begin{equation}
		\nub^{(1)}=\frac{\eb}{2}+\per{1-\frac{n}{2}}\mub\,,
	\end{equation}
	with properties
	\begin{equation}
		\si_{\nub^{(1)}}=S=\frac{1}{2}\,\sqrt{D_{-1}^{-1}+4-n^2}\,,\qquad\ga_{\nub^{(1)}}=\frac{4\,n+D_{-1}^{-1}-n^2-4}{8}\,,
	\end{equation}
	where
	\begin{equation}
		D_{-1}\equiv\frac{1}{\sum_{i=1}^n\frac{1}{\mu_i}}\,;
	\end{equation}
    note that in the expression $D_{-1}^{-1}$, the superscript denotes an exponent.

    \subsubsection{Entropy-weighted portfolio}

	As an aside, consider the long-only portfolio $\zib$ with weights
	\begin{equation}
		\zi_i(t)=\frac{\mu_i(t)\per{c-\ln\mu_i(t)}}{Z(\mub(t))}\,,\qquad Z(\xb)\equiv c-\sum_{i=1}^nx_i\,\ln x_i\,,
	\end{equation}
	because of its role in the theory of volatility-stabilized markets \cite{FKa}. Its volatility and growth rate are
	\begin{equation}
		\si_\zib^2=\frac{\sum_{i=1}^n\mu_i\,\ln\mu_i^2+2\,c\,Z(\mub)-c^2}{Z^2(\mub)}\,,\qquad
		\ga_\zib=\frac{n\,c-\sum_{i=1}^n\ln\mu_i}{2\,Z(\mub)}-\frac{\si_\zib^2}{2}\,.
	\end{equation}

    \subsubsection{$n=3$ case}
	The largest $n$ for which both $\nub^{(0)}$ and $\nub^{(1)}$ are all-long in all possible market configurations equals
	\begin{equation}
		n=3\,.
	\end{equation}
	In that case, the maximum-growth portfolio becomes simply the equal-weighted portfolio and the above formulas can be simplified to
	\begin{equation}
		\nub^{(0)}=\mub\,,\qquad\si_{\nub^{(0)}}=1\,,\qquad\ga_{\nub^{(0)}}=1\,,
	\end{equation}
	and
	\begin{equation}
		\nub^{(1)}=\frac{1}{2}\,\per{\eb-\mub}\,,\quad\si_{\nub^{(1)}}=\frac{1}{2}\,\sqrt{D_{-1}^{-1}-5}\,,\quad\ga_{\nub^{(1)}}=\frac{D_{-1}^{-1}-1}{8}\,.
	\end{equation}

	The maximum value of $D_{-1}$ is $\tfrac{1}{9}$ and occurs for the equal-weighted market portfolio. In this case, the minimum-volatility and the maximum-growth portfolio coincide, and the efficient frontier degenerates to a point. Also, the risk-adjusted return equals (cf.~\Req{Tht})
	\begin{equation}
		\tht=1-b\,.
	\end{equation}

	The efficient frontier is given by
	\begin{equation}
		\per{\si_p,\ga_p}=\per{1+\per{\frac{p^2}{2}-p}\frac{9-D_{-1}^{-1}}{4}}\,,
	\end{equation}
	where $p\in\brac{0,1}$, and it is realized by long-only portfolios.

	\section{Variable universe}

    In a market where the model parameters (i.e.,~drifts and volatilities) vary over time in a stochastic manner, it is not possible to perform a long-term optimization as in the previous section. In general, the best one can hope for is the situation where the parameters vary continuously, so that the solution to the optimization problem can be also continuously updated, as was the case in the previous subsection. This situation can be treated within the framework of dynamic control. The resulting analysis is very interesting from a mathematical point of view; however, incorporating it in practical investment processes in a robust manner presents great challenges.

    In spite of all that, in models where the market structure depends only on the ranks of various securities, it is possible to recover a version of the long-term optimization problem \emph{relative} to the performance of the market portfolio. This is the subject of this section.

	\subsection{Model and definitions}

    Consider a version of the Atlas model \cite{IPBK} described by
	\begin{equation}
		d\ln V_i(t)=g_{r_i(t)}\,dt+\sum_{l=1}^d\xi_{r_i(t)l}\,dW_l(t)\,,
	\end{equation}
	where $r_i(t)$ is the rank of stock $i$ (ordered in descending order by market capitalization; ties are resolved lexicographically).

	In this market model, stocks regularly exchange their growth rates and volatilities, but the rank-specific parameters are fixed. In order to perform a portfolio optimization, I rely on the theory of functionally-generated portfolios \cite{Fc, Fa}. The main result employed from this theory is that the performance of a portfolio $\pib$ that is fixed in rank terms,
	\begin{equation}
		\pi_i(t)=p_{r_i(t)}\,,
	\end{equation}
	relative to the market portfolio is given by
	\begin{equation}
		d\ln\frac{V_\pib(t)}{V_\mub(t)}=d\ln F_\pb(\mub_{(\cdot)}(t))+d\Tht(t)\,,
	\end{equation}
	where $\mub_{(\cdot)}(t)=\per{\mu_{(i)}(t)}_{i=1}^n$ is the vector of ranked market weights, $F$ is the generating functional,
	\begin{equation}
		F(\xb)=\prod_{i=1}^nx_i^{p_i}\,,
	\end{equation}
	and
	\begin{equation}\label{eq:Tht}
		d\Tht(t)=\ga_p^*\,dt+\frac{1}{2}\,\sum_{i=1}^{n-1}\per{p_{i+1}-p_i}\,d\La_i(t)\,,
	\end{equation}
    where
    \begin{equation}
      \ga_p^*=\frac{1}{2}\,\per{\sum_{i=1}^np_i\,\si_{(ii)}(t)-\sum_{i,j=1}^np_i\,p_j\,\si_{(ij)}(t)}\,;
    \end{equation}
    finally, $\La$'s are the local times at consecutive ranked market weights \cite{Fa, KS},
	\begin{equation}
		\La_i(t)=\La_{\ln\mu_{(i)}-\ln\mu_{(i+1)}}(t)\,.
	\end{equation}

	The stability of the Atlas model requires \cite{Fa, FK}
	\begin{equation}\label{eq:Stab}
		g_1<0\,,\qquad g_1+g_2<0\,,\quad\ldots\quad g_1+\ldots+g_{n-1}<0\,,\quad g_1+\ldots+g_n=0\,.
	\end{equation}
	Moreover, the local-times have a simple expression in terms of the growth rates \cite{Fa, FK}
	\begin{equation}\label{eq:loc}
		\lim_{T\to\infty}\frac{\La_i(T)}{T}=-2\,\sum_{j=1}^ig_j\,.
	\end{equation}

    \begin{prop}
      The long-term performance of a fixed-by-rank portfolio $\pib$ relative to the market is given by
      \begin{equation}
        \lim_{T\to\infty}\per{\frac{1}{T}\,\ln\frac{V_\pib(T)}{V_\mub(T)}}=\gb\Tr\cdot\pb+\ga_\pb^*\,.
      \end{equation}
    where $\gb$ is the column vector of the rank-based growth rates.
    \end{prop}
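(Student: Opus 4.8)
The plan is to integrate the master decomposition for $d\ln(V_\pib/V_\mub)$ over $[0,T]$, divide by $T$, and pass to the limit $T\to\infty$, tracking the three resulting contributions separately. Integration gives
\[
\frac{1}{T}\ln\frac{V_\pib(T)}{V_\mub(T)}=\frac{1}{T}\brac{\ln F_\pb(\mub_{(\cdot)}(T))-\ln F_\pb(\mub_{(\cdot)}(0))}+\frac{1}{T}\int_0^T\ga_p^*\,dt+\frac{1}{2T}\sum_{i=1}^{n-1}(p_{i+1}-p_i)\,\La_i(T)\,,
\]
so the task reduces to computing the $T\to\infty$ limit of each of the three terms on the right.

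First I would dispose of the generating-functional boundary term. Since $\ln F_\pb(\mub_{(\cdot)})=\sum_i p_i\ln\mu_{(i)}$, this term equals $\tfrac{1}{T}[\ln F_\pb(\mub_{(\cdot)}(T))-\ln F_\pb(\mub_{(\cdot)}(0))]$, which vanishes provided $\ln\mu_{(i)}(T)$ grows sublinearly. This is precisely where the strict stability inequalities \Req{Stab} enter: the conditions $g_1+\cdots+g_k<0$ for $k<n$ force positive recurrence of the gap process $\ln\mu_{(i)}-\ln\mu_{(i+1)}$, while the closing equality $g_1+\cdots+g_n=0$ fixes the center of mass, so that the ranked market weights of the Atlas model converge to a stationary, ergodic distribution and $\ln F_\pb(\mub_{(\cdot)}(T))=o(T)$ almost surely. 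The drift term is immediate: because the rank-specific volatilities $\xi_{kl}$ are fixed, the ranked covariance $\si_{(ij)}$, and hence $\ga_p^*$, are constant in time, so $\tfrac{1}{T}\int_0^T\ga_p^*\,dt=\ga_\pb^*$.

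It then remains to evaluate the local-time term by substituting the ergodic asymptotic \Req{loc}, which yields
\[
\lim_{T\to\infty}\frac{1}{2T}\sum_{i=1}^{n-1}(p_{i+1}-p_i)\,\La_i(T)=-\sum_{i=1}^{n-1}(p_{i+1}-p_i)\sum_{j=1}^i g_j\,,
\]
and the final step is the algebraic identity
\[
-\sum_{i=1}^{n-1}(p_{i+1}-p_i)\sum_{j=1}^i g_j=\sum_{k=1}^n g_k\,p_k=\gb\Tr\cdot\pb\,,
\]
which I would establish by summation by parts: writing $G_i=\sum_{j=1}^i g_j$ and reindexing collapses the left side to $p_1 G_1+\sum_{k=2}^{n-1}p_k(G_k-G_{k-1})-p_n G_{n-1}$, whereupon $G_k-G_{k-1}=g_k$, $G_1=g_1$, and the constraint $G_n=0$ from \Req{Stab} turn $-G_{n-1}$ into $g_n$. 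Adding the three limits gives $\ga_\pb^*+\gb\Tr\cdot\pb$, as claimed. The only genuinely nontrivial step is the vanishing of the boundary term, which rests on the ergodicity of the stable Atlas model; everything else is either the quoted local-time formula or \emph{elementary} summation by parts.
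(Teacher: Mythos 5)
Your proposal is correct and follows essentially the same route as the paper: substitute the local-time asymptotics \Req{loc} into the local-time term of \Req{Tht}, sum by parts and invoke $g_1+\cdots+g_n=0$ to obtain $\gb\Tr\cdot\pb$, and kill the generating-functional boundary term via the stability of the Atlas model. You simply spell out the summation-by-parts algebra and the ergodicity justification in more detail than the paper does.
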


    \begin{proof}
      Substituting \Req{loc} to the second term of \Req{Tht}, and applying summation by parts results in
      \begin{equation}
        \lim_{T\to\infty}\per{\frac{1}{T}\,\frac{1}{2}\,\sum_{i=1}^{n-1}\per{p_{i+1}-p_i}\,d\La_i(T)}=\sum_{i=1}^{n-1}p_i\,g_i-p_n\,\sum_{i=1}^{n-1}g_i\,.
      \end{equation}
      Using the last of \Req{Stab}, and exploiting the stability of the market,
      \begin{equation}
        \lim_{T\to\infty}\per{\frac{1}{T}\,\ln\frac{F_\pb(\mub_{(\cdot)}(T))}{F_\pb(\mub_{(\cdot)}(0))}}=0\,,
      \end{equation}
      completes the proof.
    \end{proof}

    The above proposition implies that the results in the first section regarding minimum-variance and maximum-growth portfolios can be extended to the case where model parameters are rank-based. The only changes are that
    \begin{itemize}
    	\item[a)] they apply asymptotically, as opposed to any particular instant in time,
        \item[b)] the performance benchmark is the market portfolio, instead of cash.
    \end{itemize}

	\subsection{A simple Atlas model}

    As an application, consider the simplified Atlas model described by
	\begin{equation}
		d\ln V_i(t)=\per{-g+n\,g\,\II\brac{r_i(t)=n}}\,dt+\si\,dW_i(t)\,,
	\end{equation}
	where $g$ and $\si$ are positive constants.

	The rank-based covariance matrix equals
	\begin{equation}
		\si_{(ij)}=\si^2\,\de_{ij}\,,
	\end{equation}
	and its inverse is
	\begin{equation}
		\per{\sib\In}_{(ij)}=\frac{\de_{ij}}{\si^2}\,.
	\end{equation}

	The minimum-variance portfolio is
	\begin{equation}
		\nub^{(0)}=\frac{\eb}{n}\,,
	\end{equation}
	and has
	\begin{equation}
		\si_{\nub^{(0)}}=s=\frac{\si}{\sqrt{n}}\,,\qquad\ga_{\nub^{(0)}}=\frac{\si^2}{2}\per{1-\frac{1}{n}}\,;
	\end{equation}
    note that this is a constant portfolio in both name- and rank-based formulations, as opposed to only in the rank-based formulation (which is the generic case).

	The maximum-growth portfolio is
	\begin{equation}
		\nub^{(1)}=\per{\frac{1}{n}-\la}\eb+\la\,n\,\bb\,,
	\end{equation}
	where
	\begin{equation}
		\la\equiv\frac{g}{\si^2}\,,
	\end{equation}
	and $\bb$ is the column vector of which the only non-vanishing element, for the stock at the bottom rank, equals one:
	\begin{equation}
		b_i=\de_{in}\,.
	\end{equation}
	The properties of $\nub^{(1)}$ are
	\begin{equation}
		\si_{\nub^{(1)}}=S=\frac{\si}{\sqrt{n}}\,\sqrt{1+n^2\per{n-1}\la^2}\,,\ \ga_{\nub^{(1)}}=\frac{\si^2}{2}\per{1-\frac{1}{n}+n(n-1)\la^2}\,.
	\end{equation}

	Both $\nub^{(0)}$ and $\nub^{(1)}$ are all-long for all possible market configurations, if
	\begin{equation}
		g\le\frac{\si^2}{n}\,,
	\end{equation}
	or, equivalently,
	\begin{equation}
		\la\le\frac{1}{n}\,.
	\end{equation}
	However, $\la=\frac{1}{n}$ results in $\nub^{(1)}$ investing exclusively in the bottom stock; this is the ultimate small-cap portfolio.

    \subsubsection{Diversity-weighted portfolio}

	As an aside, consider the long-only portfolio $\zib$ with weights
	\begin{equation}
		\zi_i(t)=\frac{\mu_i^p(t)}{D_p(\mub(t))}\,,\qquad D_p(\xb)\equiv\per{\sum_{i=1}^nx_i^p(t)}^{\frac{1}{p}}\,,
	\end{equation}
	this diversity-weighted portfolio helps explore the impact of size to performance relative to a stable market \cite{FKK}; note that $p=1$ corresponds to the market portfolio, while $p=0$ corresponds to the equal-weighted portfolio.

	Its volatility and growth rate are
	\begin{equation}
		\si_{\zib}=\si\,\frac{\sum_{i=1}^n\mu_i^{2\,p}}{\per{\sum_{i=1}^n\mu_i^p}^2}=\si\,\frac{D_{2\,p}^{2\,p}\per{\mub}}{D_p^{2\,p}\per{\mub}}
	\end{equation}
	and growth rate
	\begin{equation}
		\ga_{\zib}=\frac{\si^2}{2}\brac{\frac{1}{2}\per{1-\frac{D_{2\,p}^{2\,p}\per{\mub}}{D_p^{2\,p}\per{\mub}}}-\la+n\,\la\,\frac{\mu_{(n)}^p}{D_p^p\per{\mub}}}\,.
	\end{equation}

	\section{Discussion}

    A crucial observation about asset classes is that they exhibit performance regimes: the drifts and volatilities of the investable universe change dramatically between different periods. This causes serious challenges in properly implementing portfolio optimization, including:
    \begin{itemize}
    	\item balancing accuracy and timeliness in estimating the changes in the model parameters;
        \item reconciling the differing time scales over which investors evaluate performance, markets evolve, and portfolio is implemented;
        \item avoiding frictional costs due to portfolio changes that may impact both the short- and long-term performance of an investment strategy.
    \end{itemize}

    There is no single approach that addresses all of these issues. Still, the adoption of some broadly diversified portfolio as the return benchmark for each optimization appears to be a vital component for ensuring consistent performance over the long term.

    So far, there have been three main lines of evidence supporting this view, all of which are empirical in nature:
    \begin{enumerate}
    	\item Using a broad benchmark establishes a context for what returns are reasonably achievable by investing in a class, while avoiding extreme concentration or overreliance on the estimates of the model parameters.
        \item In addition, the accuracy of various estimates is increased when working with relative quantities (computed relative to the broad benchmark), as opposed to absolute quantities.
        \item Finally, linking the optimization solution for multiple consecutive periods is facilitated by employing a relative objective function (e.g., excess return relative to the broad benchmark).
    \end{enumerate}

    The results above, especially those contained in the second section, furnish novel, theoretical support for this view, at least in the case of equity markets where Atlas models have been shown to be reasonable approximations.

    \subsubsection*{Acknowledgement}

    I would like to thank Bob Fernholz, for composing a problem set that inspired these notes. I also had useful discussions with Adrian Banner, Ioannis Karatzas, and Phillip Whitman.


\end{document}